\newtheorem{theorem}{Theorem}
\title{XOR-based Source Routing}
\author{\IEEEauthorblockN{J\'er\^ome Lacan and Emmanuel Lochin}
\IEEEauthorblockA{ISAE-SUPAERO\\
Universit\'e de Toulouse\\
Email: \{jerome.lacan,emmanuel.lochin\}@isae-supaero.fr}
}
\date{\today}
\begin{document}
\maketitle

\begin{abstract}
We introduce a XOR-based source routing (XSR) scheme as a novel approach to enable fast forwarding and low-latency communications. XSR uses linear encoding operation to both 1)~build the path labels of unicast and multicast data transfers; 2)~perform fast computational efficient routing decisions compared to standard table lookup procedure without any packet modification all along the path. XSR specifically focuses on decreasing the complexity of forwarding router operations. This allows packet switches (e.g, link-layer switch or router) to perform only simple linear operations over a binary vector label which embeds the path. XSR provides the building blocks to speed up the forwarding plane and can be applied to different data planes such as MPLS or IPv6. Compared to recent approaches based on modular arithmetic, XSR computes the smallest label possible and presents strong scalable properties allowing to be deployed over any kind of core vendor or datacenter networks. At last but not least, the same computed label can be used interchangeably to cross the path forward or reverse in the context of unicast communication.
\end{abstract}

\section{Introduction}
\label{sec:introduction}
Source routing is a very old technique to route a data packet from a source to a destination, initially presented in \cite{rfc791} and currently developed at the IETF within the SPRING (Source Packet Routing in Networking) working group \cite{rfc7855}. Compared to conventional routing that forwards packets following both the IP destination address and the forwarding table lookup, source routing allows the sender to partly or completely indicate inside the packet headers the path that must be followed. Source routing brings out several advantages. As highlighted in \cite{2015:Lee}, the data plane becomes simpler, core elements perform simple operations and traffic engineering is more flexible.

Source routing technique gained in popularity in particular following the rapid spread of Software Defined Networking (SDN) paradigm as a scalable solution to deploy services in datacenters \cite{2016:Abujoda}. 
In particular, the authors in \cite{2018:SDN-SR} illustrate that SDN-based source routing significantly decrease flow-states exchange by storing the path information into packet headers. Encoding the whole path inside a packet suppresses expensive lookup procedures inside core packet switches (e.g, link-layer switch or router) as each switch is able to quickly identify the next hop of the path stored in the packet. 

The length of the encoded path label is one of the potential issues in source routing. In particular, there are use cases where each individual hop must be specified in the label resulting in a long list of hops that is instantiated into a MPLS label stack (in the MPLS data plane) or list of IPv6 addresses (in the IPv6 data plane) \cite{rfc7855}. Obviously, this leads to potentially oversized labels. Furthermore, current MPLS equipments only support limited number of stacked labels (five to ten labels are currently supported by some routers \cite{2017:guedrez}). To cope with this problem, there exists an up to date variant called segment routing \cite{rfc8402} that leverages source routing principle. Segment Routing encodes a path label as a stack composed by node segments (a router) and adjacency segments (a router interface output) \cite{rfc8402} which prevents to record all nodes addresses. 

XOR-based source routing (XSR) scheme is a novel approach to improve data plane operations enabling fast forwarding. The originality of XSR is to conjointly optimize the size of the path label with low switching processing cost while enabling multicast and unicast forwarding. This is explained by the use of linear operations over binary vectors. A large survey browsing previous attempts is proposed in \cite{segmentRoutingBiblio}; eight papers are identified therein. In this study, we choose to select and focus on two recent competitive works that provide path optimization techniques to minimize the size of a path label encoded inside packets. We will mainly discuss XSR against these two solutions proposed respectively in 2017 \cite{2017:optimalPathEncoding} and in 2018 \cite{rdna2018}.
In the latter, the authors propose a whole architecture that lays on modular arithmetic to compute a label number to identify (following a reverse operation) the output switch port considering a unique router ID \cite{rdna2018}. After presenting our proposal, we will show in \ref{sec:comparisons} that RDNA requires larger path label length and performs less computational efficient operations than XSR (i.e. XOR versus modulo operation), in particular for multicast. In \cite{2017:optimalPathEncoding}, the authors propose an elegant algorithm that minimizes the maximum length of any encoded path in the network. The main drawback is the restriction of this solution to unicast exchanges. On the contrary, XSR copes with all these issues enabling unicast and multicast communications at the same processing cost, performing fast and low latency routing decisions without any packet modification.

\section{The Path Encoding Problem}
\label{sec:pathproblem}

Actually, a router only needs to assess the output link(s) corresponding to a given input packet. So, the path of a packet can be encoded by the output links sequence of the routers composing the path. Since the labels of the output links (denoted interface labels in the following) are local to a node, they can be represented by short bit vectors. For example, a node having $3$ output paths can number them $0,\ 1$ and $2$ and thus, uses $2$-bit vectors $(00),\ (01)$ and $(10)$ as interface labels to identify them. This principle, adopted by the authors in \cite{Soliman:2012}, uses short fixed-length interface labels. Note that the number of bits needed to identify each interface label of a router depends on the number of output links. The authors in \cite{2017:optimalPathEncoding} further investigate this approach by using variable-length prefix-codes usually used in lossless compression systems to represent the interface labels of the output links. They show that they can reduce the lengths of the largest encoded paths. 

 With segment routing, all the labels have the same length and each router  considers the first label at the top of the stack in the received packet, processes the packet, then removes this label. The next router uses the next label until the final receiver. When short interface labels are used, this strategy can not be applied because the interface labels size is not necessarily a multiple of $8$ bits. In \cite{Soliman:2012}, each interface label has a fixed length and a hop counter is added to the header allowing to identify the current path position. This counter is then decremented by each router before forwarding the packet involving data modifications. Similarly to \cite{Soliman:2012}, due to variable interface label sizes, a pointer is also needed in \cite{2017:optimalPathEncoding} to point the current position in the encoded path. After reading its corresponding interface label, the router slides the pointer and forwards the packet. 

The localization strategies of the labels in the encoded paths have several implications. First, removing or modifying some parts of the label involves header supplementary data operations and computations. The second consequence is that these strategies are only usable for unicast transmissions. Indeed, if we consider multicast or multipath transmissions, some router must send some packets on several interfaces. However, the header modifications done by the router are only based on its local information and thus it is not possible to make different modifications on the packets sent to the different interfaces. 

Other strategies have been proposed to enable multipath or multicast. In \cite{2004:HB-DSR}, the source builds a Bloom filter which is based on the addresses of the nodes of the path and which is stored in the packet header. At the reception of a packet, each router checks whether the addresses of its neighbours are verified by the Bloom filter and forwards the packet to the valid ones. Since Bloom filters are probabilistic tools, the main difficulty with this scheme is to choose the right parameters of the filter to minimize the ratio of false positive while maintaining a reasonable size.    
Finally for multicast transmissions, where a data can be sent to different interfaces, the interface label can be chosen as a bitmap. For example, the label of a packet that must be sent on the interfaces $0, 4$ and $5$ of a router having $6$ interfaces is $(110001)$. A simple method to generate the encoded path is to concatenate the interface labels. More elaborated strategies presented in \cite{coxCast2014} and \cite{rdna2018} encode the path into an integer number. The routers recover their information by computing the residue of this integer modulo a prime number.   

\section{XOR-based Source Routing in a Nutshell}
\label{sec:XSRnutshell}

Let us start with an illustration of XSR interface labels principle. We recall that an interface label corresponds to an interface IDs of a router or a set of interfaces in case of multicast.
Fig.~\ref{fig:localLabelUnicast} shows an example where an input packet of a unicast transmission coming from the interface $(4=100_b)$ is forwarded to the interface $(3=011_b)$. The interface label of this packet for this router is defined as the XOR between the input interface ID and the output interface ID, \textit{i.e.} $100_b \oplus 011_b = 111_b$. It is obvious that the  output interface ID can be computed from the interface label and the input interface ID ($100_b \oplus 111_b = 011_b$). Similarly, for the reverse path, the input interface ID can be retrieved from the output interface ID and the interface label ($011_b \oplus 111_b = 100_b $). 

\begin{figure}[!htb]
    \centering

    \begin{minipage}{0.25\textwidth}
        \centering
		\includegraphics{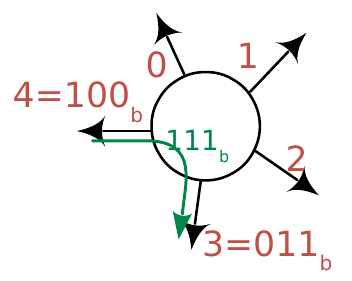}
    	\caption{Unicast interface label}
	    \label{fig:localLabelUnicast}
    \end{minipage}
    \begin{minipage}{0.23\textwidth}
        \centering
		\includegraphics{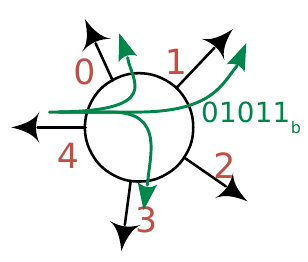}
	\caption{Multicast interface label}
	\label{fig:localLabelMulticast}    
	\end{minipage}
\end{figure}

For a multicast transmission Fig.~\ref{fig:localLabelMulticast}, the interface label to the packet is the bitmap $(01011)$ of the output interface IDs (to be read from the right to the left). A $1$ in positions $0$, $1$ and $3$ means the packet must be forwarded to the ports $0$, $1$ and $3$.  

XSR principle is to concatenate the interface labels of each router of the path into a vector $L$. We assume that each router has a unique identifier $R_{ID}$ (e.g. hardware address). The path label, denoted $P$, is computed by the source (or directly provided by a centralized SDN controller) by applying to $L$ a linear transformation based on the IDs of the routers of the path. This path label is stored in the header of the packet. To forward a packet, each router applies a filtering function (based on its own ID) to the path label to get its interface label. This function is a linear function over the binary finite field $\mathbb{F}_2$ only using XOR-based operations.

The first advantage is that packets are not modified when crossing a router. 
On the contrary to \cite{2017:optimalPathEncoding}, the interface labels list does not need to be ordered in the path label preventing the use of pointer or vector. This filtering function is simply few dot products of short vectors that can be done on-the-fly, compliant with fast routing strategies like e.g. cut-through.     
 
In brief, the length of the path label $P$ is the sum of the lengths of the interface labels of each router of the path, even if they do not have the same length.

\begin{figure}[htb]
	\begin{center}
		\includegraphics[width=0.4\textwidth]{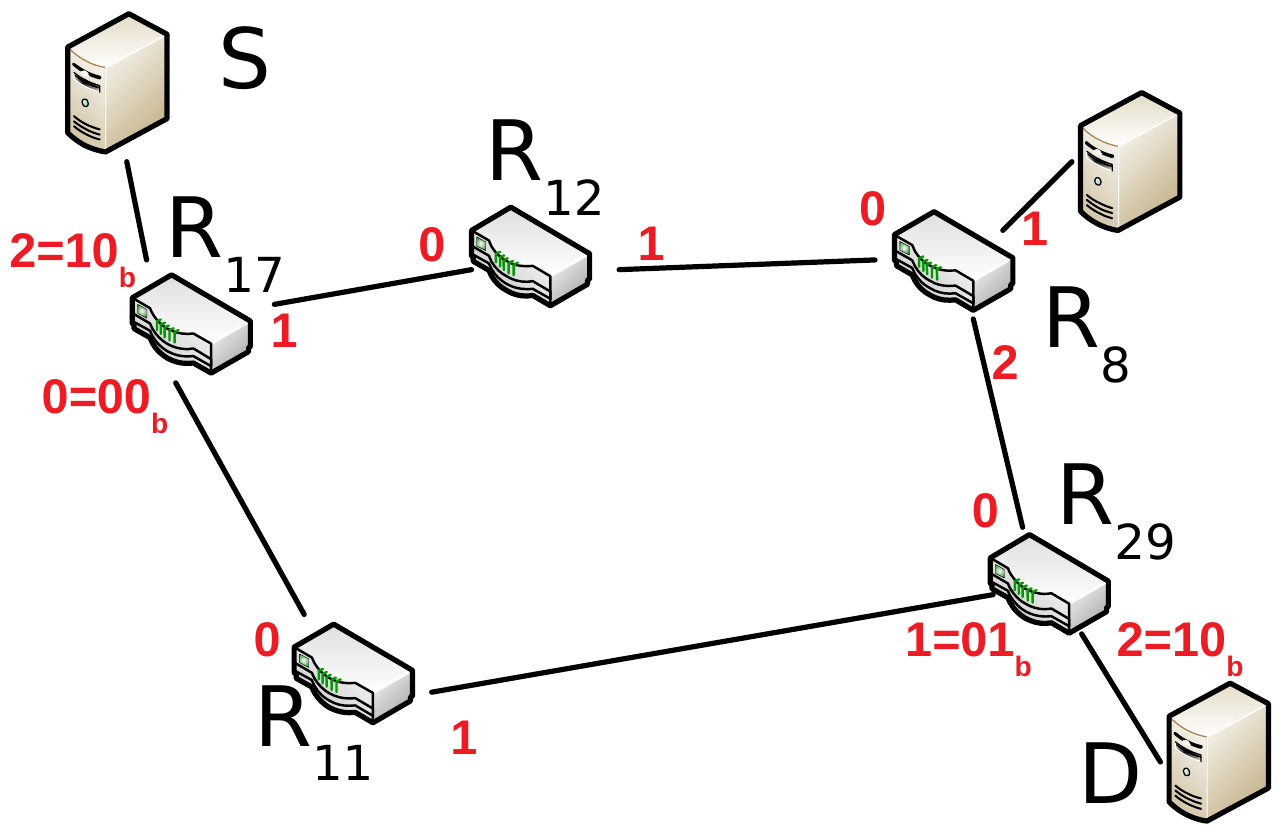}
	\end{center}
	\caption{Example network}
	\label{fig:ExNetwork}
\end{figure}

To illustrate this, let us consider the network presented in Fig.~\ref{fig:ExNetwork}. Assume that the source $S$ requests to send a packet to the destination $D$ through the path $(R_{17}|R_{11}|R_{29})$. The path can be represented by the sequence of interface labels $L=(L_{17}\ L_{11}\ L_{29}) = (2\oplus 0\ 0\oplus 1\ 1\oplus 2)=(10_b\oplus 00_b\ 0_b\oplus 1_b\ 01_b\oplus 10_b)=(10_b\ 1_b\ 11_b)$. Once again, the lengths of the interface labels varies according to the routers or can be variable for the same router like in \cite{2017:optimalPathEncoding}. Here, we consider that $R_{17}$ and $R_{29}$ labels have a length of $2$ bits while $R_{11}$ label has a length of $1$ bit.  

The path label $P$ is computed by solving the following system built from the filtering functions $F_{17},\ F_{11}$ and $F_{29}$:

\begin{eqnarray}
\label{eq:filtering1}
    F_{17}(P)& = &(10)\\ 
\label{eq:filtering2}
    F_{11}(P)&=&(1)\\ 
\label{eq:filtering3}
    F_{29}(P)&=&(11)
\end{eqnarray}

These functions are linear operations characterized by a matrix defined from the routers IDs. For $R_{17}$, let us denote this matrix $M_{17}$. We then have:

$$F_{17}(P) = P\cdot M_{17} =(10) $$

where the notation $\cdot$ between two vectors or matrices represents the matrix multiplication. Since the length of $P$ is the sum of the lengths of the interface labels, i.e. $5$, and the length of the interface label is $2$, $M_{17}$ has $5$ rows and $2$ columns. In this simple example, the first column is defined by the router ID $17=010001_b$ and the second column as its cyclic shift. Finally the label must verify:
\begin{equation}
    F_{17}(P)=P\cdot M_{17} = P.
\begingroup 
\setlength\arraycolsep{2pt}
    \begin{psmallmatrix}
        1 & 1\\ 0 & 1\\ 0 & 0\\ 0 & 0\\ 1 & 0 
     \end{psmallmatrix}
\endgroup
=(10)    
\end{equation}

 By using the same method, we can obtain the following linear equations for $R_{11}$ and $R_{29}$: 
\begin{equation}
    P\cdot M_{11} = P\cdot 
    \begingroup 
    \setlength\arraycolsep{2pt}
    \begin{psmallmatrix}
         0\\ 1\\ 0\\ 1\\ 1 
    \end{psmallmatrix}
    \endgroup
=(1) \textrm{ and }     P\cdot M_{29} = P\cdot 
    \begingroup 
    \setlength\arraycolsep{2pt}
    \begin{psmallmatrix}
       1 & 1\\ 1 & 1\\ 1 & 1\\ 0 & 1\\ 1 & 0  
    \end{psmallmatrix}
    \endgroup
=(11)
\end{equation}

The set of linear combinations can be aggregated in the $5\times5$ matrix $M=(M_{17}|M_{11}|M_{29})$ allowing to obtain the global relationship between the path label $P$ and the concatenation of interface labels list $L$:
\begin{equation}
    P\cdot M = L
\end{equation}
By observing that $M$ is invertible, the source computes $M^{-1}$ and $P$ as follows:
\begin{equation}
    P = L\cdot M^{-1} = (10\ 1\ 11)\cdot 
        \begingroup 
    \setlength\arraycolsep{2pt}
    \begin{psmallmatrix} 0 & 0 & 1 & 1 & 1\\ 1 & 0 & 0 & 1 & 1\\ 1 & 1 & 1 & 1 & 1\\ 1 & 1 & 0 & 0 & 1\\ 1 & 1 & 1 & 0 & 1 
    \end{psmallmatrix}
    \endgroup
= (11100)
\end{equation}

It can be verified that (\ref{eq:filtering1}), (\ref{eq:filtering2}) and (\ref{eq:filtering3}) hold for this value of $P$. 

\section{XOR-based Source Routing}
\label{sec:description}

We have previously illustrated within a little example the main principle of our proposal. We now present in further details the core mechanisms of XOR-based source routing.

\subsection{Network Hypotheses}
\label{ssec:hypotheses}

We define a network has a set of edge nodes (source and/or destination nodes) connected to $n_R$ routers $R_j,\ j=1,\ldots,n_R$ as illustrated Fig.~\ref{fig:ExNetwork}. Communications occur between several edge nodes through a path formed by several routers. For example, a unicast communication between a source $S$ and destination $D$ could use either the path  $(R_{17}|R_{50}|R_{29})$ or $ (R_{17}|R_{12}|R_{39}|R_{29})$. The connection can be unicast, multipath or multicast.

\subsection{Router Forwarding}
\label{ssec:router}

The main operation done by a router $R_i$ at the reception of a packet is to filter the path label $P$ to recover its corresponding interface label $L_i$. 

The general form of the simple example presented in \ref{sec:XSRnutshell} is to consider that each router stores $2^\epsilon$ binary filtering matrices $\overline{M}_{i}^{(e)}$, for $e=0,\ldots,2^\epsilon-1$. Each matrix has $\overline{s_P}$ rows and $\overline{s_i}$ columns, where $\overline{s_P}$ is the maximum length in bits of the size of $P$ and $\overline{s_i}$ is the maximum size of the interface labels of $R_i$.

At the reception of a packet, the router reads the path label $P$ of size $s_P$ and the value $e$ set by the source stored on $\epsilon$ bits (further explained in the following). 

Since $s_P\le \overline{s_P}$ and $s_i\le \overline{s_i}$, the router takes $M_i$ as the submatrix of $\overline{M}_{i}^{(e)}$ formed by the first $s_l$ rows and $s_i$ columns.   

The filtering function $F_i$ is a set of linear operations which consists in multiplying the path label $P$ by a filtering matrix $M_i$ with $s_i$ columns and $s_P$ rows. 

More formally, if $\mathbb{F}_2$ denotes the binary finite field, the filtering function is defined as follows:
\begin{displaymath}
F_i \left\{ \begin{array}{ccc}
\mathbb{F}_2^{s_P} & \longrightarrow & \mathbb{F}_2^{s_i}\\
P & \longrightarrow & P\cdot  M_i\\
\end{array} \right.
\end{displaymath}

These operations are summarized in Fig.~\ref{fig:receiverProcess}
\begin{figure}[htb]
	\begin{center}
		\includegraphics[width=0.45\textwidth]{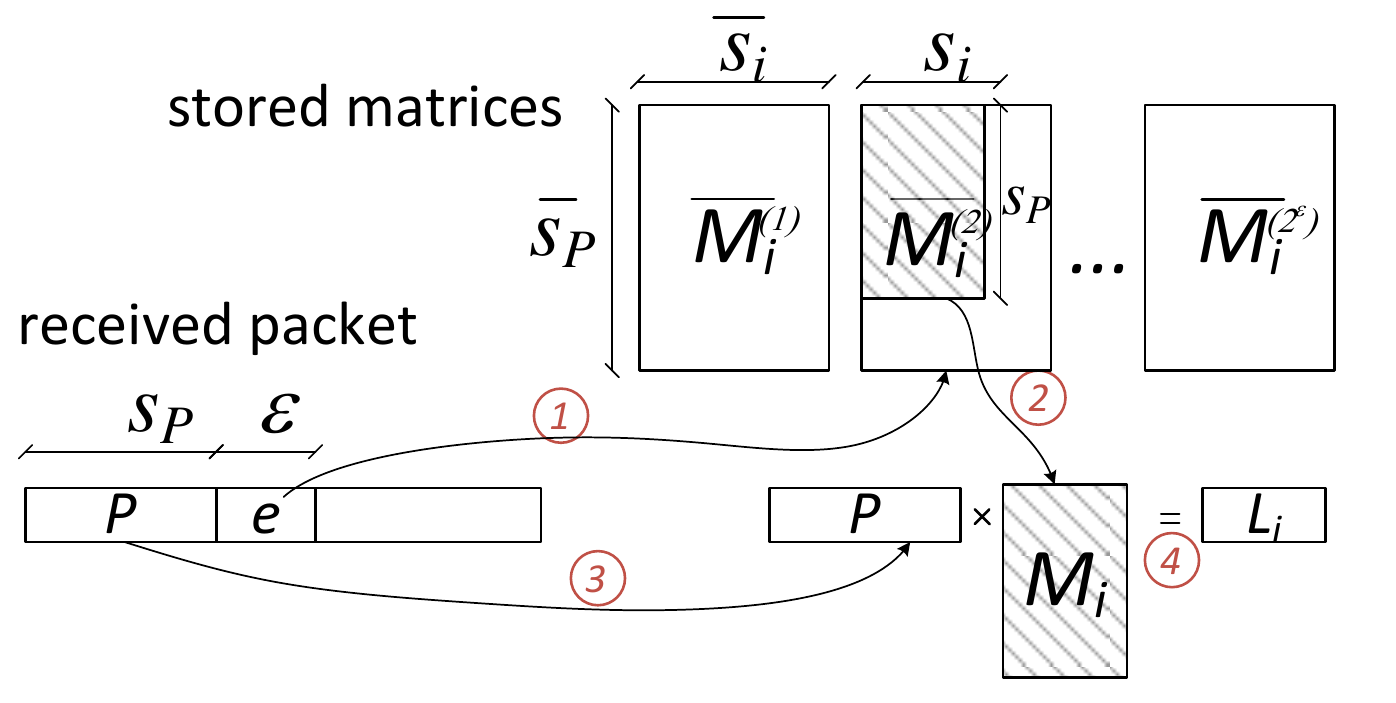}
	\end{center}
	\caption{Receiver operations. The numbers represent the different steps.}
	\label{fig:receiverProcess}
\end{figure}

\subsection{Path Label Construction}
\label{ssec:pathLabelConstruction}

Once again, the construction of a path label from a source $S$ to a destination $D$ is done either by the source itself or by the controller which builds the path label and send it to the source. 

Let $\{R_{i_1},\ R_{i_2},\ldots, R_{i_p}\}$ the set of the routers on the path. For unicast transmission, this set corresponds to a sequence of routers. Considering multicast, this sequence is not ordered and corresponds to the set of routers that will forward the packet.

The concatenation of the interface labels corresponds to a bit vector denoted $L=(L_{i_1},\ L_{i_2},\ldots, L_{i_p})$ with a size $s_L=\sum_{k=1}^p s_{i_k}$. 

As seen in the previous paragraph, the routers multiply the path label by their filtering matrix to obtain their interface label. This implies that the path label $P$ must verify some linear equations. These equations can be represented by the matrix $M=(M_{i_1},\ M_{i_2},\ldots, M_{i_p})$ which is the concatenation of the filtering matrices used by the routers of the path. Since we set $s_P$ the length of $P$, to $s_L$, $M$ is a $s_L\times s_L$-square matrix. 
We define a path label as \emph{valid} if the filtering process (defined in the previous section) applied by any router of the path produces the correct interface label of a given router. We will show that we can obtain a valid path label $P$ if $M$ is nonsingular.

The construction of $P$ is based on the following theorem:
\begin{theorem}
	\label{thm:nonsingular}
	Let $M^{-1}$ be the inverse of M. Then:
	\begin{equation}
		P \overset{\underset{\mathrm{def}}{}}{=} L\cdot M^{-1}
	\end{equation}
	is a valid path label. 
\end{theorem}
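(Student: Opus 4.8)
The plan is to verify the validity condition directly from the definition $P := L\cdot M^{-1}$, exploiting the block structure of $M$ and the fact that right-multiplication distributes over horizontal concatenation of matrices. First I would note that the statement is only meaningful under the standing hypothesis that $M$ is nonsingular, so $M^{-1}$ exists and $P$ is well-defined as an element of $\mathbb{F}_2^{s_L}=\mathbb{F}_2^{s_P}$; no dimension mismatch occurs since $M$ is the $s_L\times s_L$ square matrix $(M_{i_1}\,|\,M_{i_2}\,|\,\ldots\,|\,M_{i_p})$ and $s_P$ was set equal to $s_L$.

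Next I would compute $P\cdot M$. By associativity of matrix multiplication over $\mathbb{F}_2$,
\begin{displaymath}
P\cdot M = (L\cdot M^{-1})\cdot M = L\cdot (M^{-1}\cdot M) = L .
\end{displaymath}
The key step is then to read this identity block-wise. Because $M=(M_{i_1}\,|\,M_{i_2}\,|\,\ldots\,|\,M_{i_p})$ is the horizontal concatenation of the filtering matrices, and right-multiplying a fixed row vector $P$ by such a concatenation concatenates the individual products, we have
\begin{displaymath}
P\cdot M = \bigl(P\cdot M_{i_1}\ \big|\ P\cdot M_{i_2}\ \big|\ \ldots\ \big|\ P\cdot M_{i_p}\bigr).
\end{displaymath}
On the other hand $L=(L_{i_1},\,L_{i_2},\ldots,L_{i_p})$ is the concatenation of the interface labels, and the $k$-th block of $M$ has exactly $s_{i_k}$ columns, matching the length $s_{i_k}$ of $L_{i_k}$. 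Hence the two block decompositions are aligned, and equating corresponding blocks gives $P\cdot M_{i_k}=L_{i_k}$ for every $k=1,\ldots,p$.

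Finally I would translate this into the language of the filtering functions: by definition $F_{i_k}(P)=P\cdot M_{i_k}$, so the above shows $F_{i_k}(P)=L_{i_k}$ for each router $R_{i_k}$ on the path, which is precisely the definition of $P$ being a valid path label. The only point requiring care — and the closest thing to an obstacle — is the bookkeeping of the block sizes: one must check that the column partition of $M$ induced by the concatenation $(M_{i_1}\,|\,\ldots\,|\,M_{i_p})$ coincides with the partition of the coordinates of $L$ into the segments $L_{i_k}$, so that "equating blocks" is legitimate; this follows immediately from the construction of $M$ and $L$ in Section~\ref{ssec:pathLabelConstruction}, where both are assembled router-by-router with the $k$-th contribution having width $s_{i_k}$. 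Everything else is a one-line linear-algebra computation valid over $\mathbb{F}_2$ (indeed over any field).
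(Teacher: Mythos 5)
Your proof is correct and follows essentially the same route as the paper's: compute $P\cdot M = L\cdot M^{-1}\cdot M = L$, decompose the product block-wise along the concatenation $M=(M_{i_1},\ldots,M_{i_p})$, and equate blocks to get $P\cdot M_{i_k}=L_{i_k}$ for each router. Your extra remarks on the alignment of the block widths $s_{i_k}$ are a harmless (and welcome) elaboration of the same argument.
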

\begin{proof}
	From the definition of $P$, we have $P\cdot M=L\cdot M^{-1}\cdot M=L=(L_{i_1},\ L_{i_2},\ldots, L_{i_p})$. On the other hand, $P\cdot M=P\cdot (M_{i_1},\ M_{i_2},\ldots, M_{i_p})=(P\cdot M_{i_1},\ P\cdot M_{i_2},\ldots, P\cdot M_{i_p})$. It follows that, for each $k=1,\ldots,p$, $P\cdot M_{i_k}=L_{i_k}$ and thus $P$ is valid. 
\end{proof}

\begin{theorem}
	\label{thm:returnPath}
	Let $P$ be a valid path built from Theorem \ref{thm:nonsingular} to route the packets for a unicast transmission from a sender to a destination. 
	
	Then, $P$ is also valid to route the packets from the destination to the source. 
\end{theorem}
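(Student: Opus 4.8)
The plan is to exploit two structural facts about XSR: first, the interface label a router associates with a unicast flow is the XOR of that router's input and output interface identifiers, and XOR is commutative; second, the filtering matrix $M_i$ of a router depends only on its own identifier $R_{ID}$ (and on $s_P$), not on the router's position in the path nor on the direction of traversal. Together these imply that the system of linear equations characterising a valid reverse path label is literally the same system that characterises the valid forward path label $P$, so $P$ itself works in both directions.

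First I would fix notation for the forward path $(R_{i_1} | R_{i_2} | \cdots | R_{i_p})$: for $1 \le k \le p$, let $a_k$ be the identifier of the interface of $R_{i_k}$ facing the previous hop ($R_{i_{k-1}}$, or the source when $k=1$) and $b_k$ the identifier of the interface facing the next hop ($R_{i_{k+1}}$, or the destination when $k=p$). By the definition of the unicast interface label in Section~\ref{sec:XSRnutshell}, $R_{i_k}$'s interface label for the forward flow is $L_{i_k} = a_k \oplus b_k$, and by Theorem~\ref{thm:nonsingular} the label $P = L\cdot M^{-1}$ satisfies $P\cdot M_{i_k} = L_{i_k}$ for every $k$.

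Next I would describe the reverse flow $(R_{i_p} | \cdots | R_{i_1})$ and compute its interface labels. When the packet travels from the destination back to the source, at router $R_{i_k}$ the roles of the two interfaces are exchanged: the packet now enters on $b_k$ and leaves on $a_k$. Hence the reverse interface label of $R_{i_k}$ is $b_k \oplus a_k$, which equals $a_k \oplus b_k = L_{i_k}$ by commutativity of $\oplus$. Since each router still uses the same matrix $M_{i_k}$ (it is derived from $R_{ID}$ alone, and $s_P$ is unchanged), a reverse path label is valid precisely when it satisfies $P'\cdot M_{i_k} = L_{i_k}$ for all $k$ --- exactly the conditions already met by $P$. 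I would conclude that $P$ is a valid reverse path label; optionally I would add the remark that the reverse aggregate matrix $M' = (M_{i_p} | \cdots | M_{i_1})$ is a column permutation of $M$, hence nonsingular, and that $L'\cdot (M')^{-1}$, with $L'$ the reverse concatenation, recovers the same $P$, consistent with Theorem~\ref{thm:nonsingular}.

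The only delicate points, and where I would spend the most care, are the boundary hops: one must check that the interface of $R_{i_1}$ ``towards the source'' used as an output on the return trip is the same physical interface used as input on the forward trip (and symmetrically for $R_{i_p}$ and the destination), so that $a_1$ and $b_p$ are genuinely well defined and unchanged. One should also state explicitly that XSR's notion of validity is insensitive to the ordering of the routers in the path, since the filtering is purely local --- this is what lets the forward and reverse problems coincide despite the reversed traversal order. Beyond that the argument is immediate.
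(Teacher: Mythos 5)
Your proposal is correct and follows essentially the same route as the paper's proof: both rest on the observations that the filtering matrix $M_{i_k}$ and hence the recovered label $L_{i_k}=P\cdot M_{i_k}$ are independent of the traversal direction, and that the unicast interface label $ID_i\oplus ID_o$ is symmetric under exchanging input and output, so each router forwards the reverse packet back along the incoming interface of the forward trip. Your added remarks on the permuted aggregate matrix and the boundary hops are harmless elaborations of the same argument.
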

\begin{proof}
    To prove this theorem, it is sufficient to prove that if a router receives a packet with a path label $P$ on an input interface $ID_i$ and forwards it to the output interface $ID_o$, then, if it receives a packet with the same path label on the interface $ID_o$, it forwards it to the interface $ID_i$. 

     Let us consider a router $R_u$ of the path. On the forward path, the router filters the path label of a packet with the matrix $M_u$ and obtains the interface label $L_u=P.M_u$. According to \ref{sec:XSRnutshell}, $L_u$  is the XOR of the input interface $ID_i$ and the output interface $ID_o$. Since it knows $ID_i$, it is able to recover the $ID_o$ by XORing $L_u$ and $ID_i$ ($ID_i\oplus L_u= ID_i\oplus (ID_i\oplus ID_o)= ID_o$). Then it transmits the packet on the output interface. 
     
     On the reverse path, it receives a packet from the interface $ID_o$ with the same path label. By applying the filtering function $M_u$, it recovers $L_u$. Since it knows the packet arrived from interface $ID_o$, it XORs $ID_o$ and $L_u$ and obtains $ID_i$ ($ID_o\oplus L_u= ID_o\oplus (ID_i\oplus ID_o)= ID_i$. Then it forwards the packet to the interface $ID_i$. 
\end{proof}

\section{Analysis of the Parameters}
\label{sec:analysis}

The main system parameter that must be evaluated is the probability of building a valid path label. Indeed, this probability impacts on the complexity of the construction of a path label and allows to correctly size $\epsilon$ previously presented in \ref{ssec:router}.
Note we make no assumption on the filtering matrices and consider them as random binary matrices.

\subsection{Probability of Construction of a Valid Path Label}
\label{ssec:proSuccValid}

According to \ref{ssec:pathLabelConstruction}, a valid path label can be built if the matrix $M$ is invertible.  \cite{berlekamp:80} shows that the probability that a $s_L\times s_L$ binary random matrix is invertible is equal to 
\begin{equation}
    \sigma_0(s_L) = \sum_{i=0}^{s_L-1} (1-2^{s_L-i})
\end{equation}
Fig. \ref{fig:invertMat} confirms that this probability quickly converges to the limit which is known to be $0.2888$.

A valid path can be build if at least one of the $2^\epsilon$ matrices built from the matrices stored by the routers is invertible. The probability is thus equal to \begin{equation}
    \sigma_\epsilon(s_L) = 1 - (1-\sigma_0(s_L))^{2^\epsilon}
\end{equation}
These values are plotted Fig. \ref{fig:validPath}. It can be observed that a valid path can be obtained with a very high probability (for example, $0.99998$ for $2^5=32$ $ 8\times 8$ binary matrices stored by each router). 
\begin{figure}[!htb]
    \centering

    \begin{minipage}{0.24\textwidth}
        \centering
        \includegraphics[width=\linewidth]{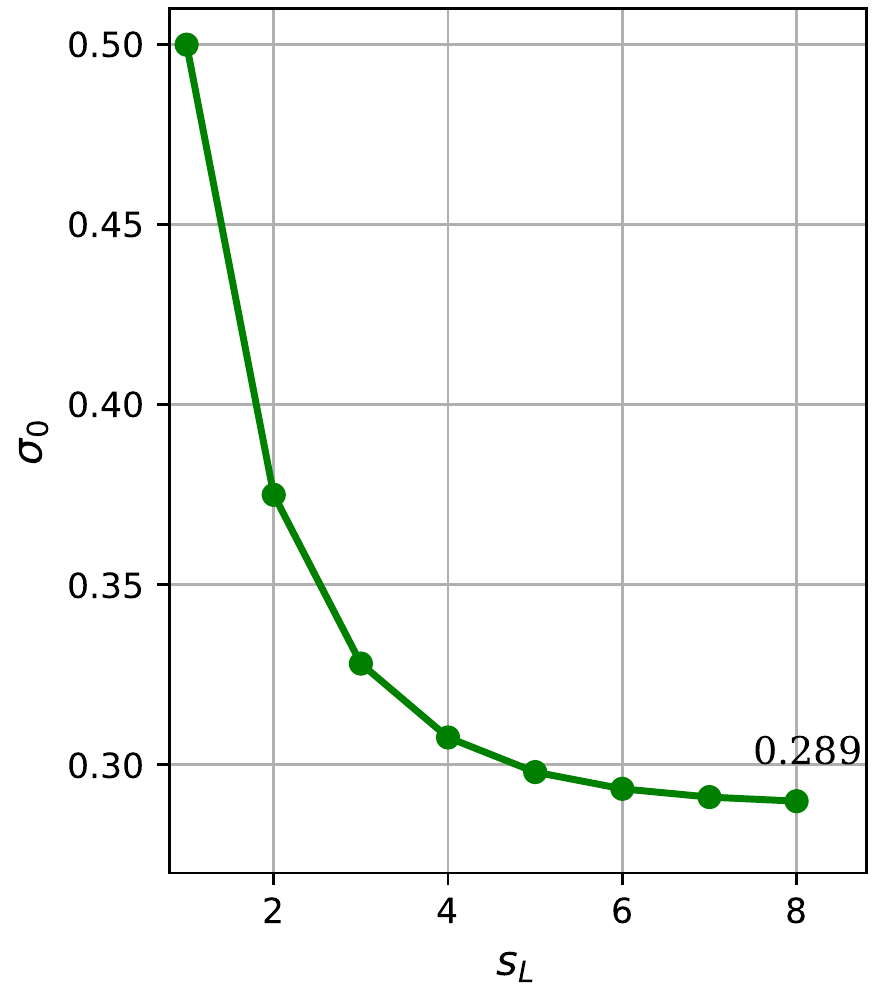}
        \caption{proba. invertible matrix}
        \label{fig:invertMat}
    \end{minipage}
    \begin{minipage}{0.24\textwidth}
        \centering
        \includegraphics[width=\linewidth]{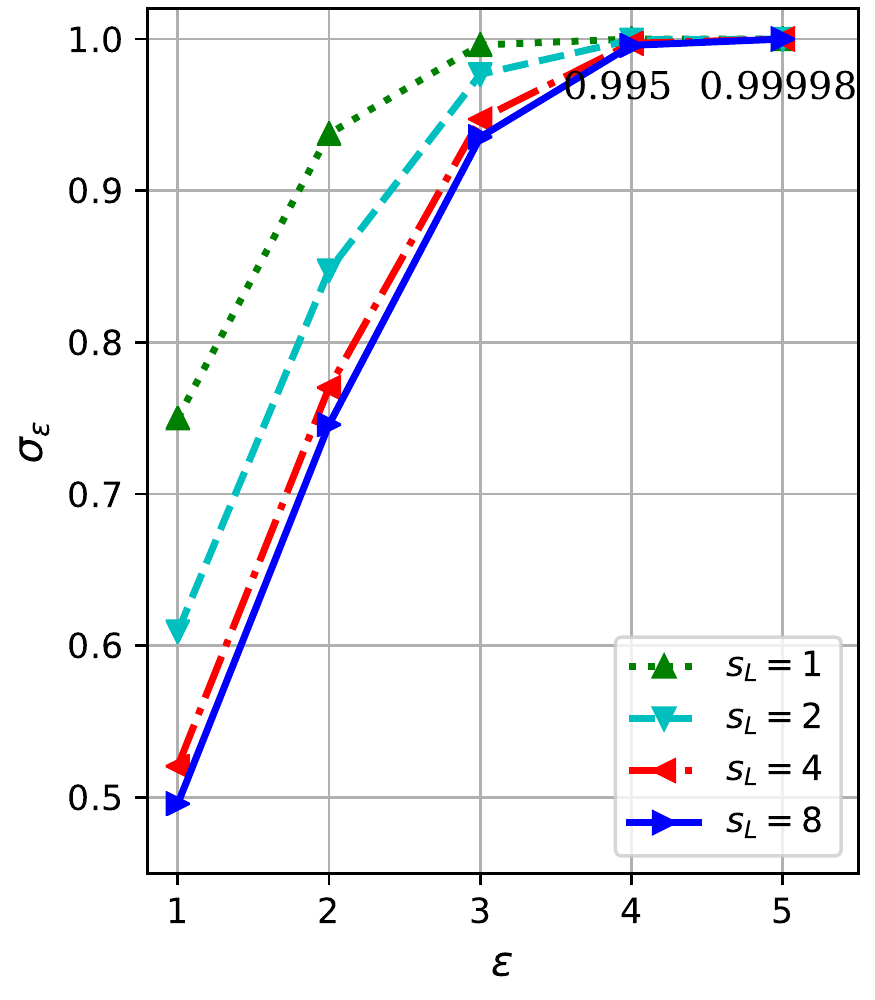}
        \caption{proba. valid path}
        \label{fig:validPath}
    \end{minipage}
\end{figure}

\subsection{Complexity of the Path Label Construction}
\label{ssec:complPathLabelConstruction}

As the path label computation can also be done in the control plane (e.g. SDN controller), there is no impact on the data plane forwarding procedure. However, we believe that estimating its complexity is of interest.

To build a path label from a matrix $M$, it is necessary to find its rank and to perform a matrix inversion. This is generally done by using algorithms based on Gaussian Elimination (GE). Even if there exists theoretical optimizations like  Strassen's algorithm \cite{Strassen:1969} which runs in $\mathcal{O}(n^{2.807})$ operations for very large matrices, we will consider that the complexity is in  $\mathcal{O}(n^3)$ for each tested matrix.

To evaluate the average number of tested matrices, we can observe that it is necessary to perform $k$ GEs only when the first $k-1$ fail to build a valid path and when the $k^{th}$ succeeds. This occurs with the probability $(1-\sigma)^{k-1}\sigma$. It follows that the average number of GEs is: 
\begin{equation}
\sum_{k=1}^{\infty} k(1-\sigma)^{k-1}\sigma = \frac{1}{\sigma}
\end{equation}
A value of $\sigma=0.6103$ gives an average number of GEs equal to $1/0.6103=1.6385$.

\subsection{Number of Signalling Bits}
\label{ssec:packetHeaderSize}
Section \ref{ssec:proSuccValid} has shown that $\sigma$ must be chosen greater or equal to $3$ to provide a high probability of building a path label. This represents the size of the signalling field added to the packet header with the path label. For example, this value is similar to the size of the pointer used in \cite{2017:optimalPathEncoding} which is $4$ bits in most of studied configurations. 

\subsection{Storage Amount in Routers}
\label{ssec:storageRouters}
According to \ref{ssec:router}, each router stores $2^\epsilon$ binary matrices of $\overline{s_P}$ rows and $\overline{s_i}$ columns. So the global amount stored by a router is $$2^\epsilon\times \overline{s_P} \times \overline{s_i}$$ 

By considering unicast transmissions, reasonable maximal values of $\epsilon=4$, $\overline{s_P}=50$ and $ \overline{s_i}=10$ can be chosen. This represents $50\times 10\times 16=8000$ bits, \textit{i.e} $1000$ bytes which is completely scalable.

For multicast transmissions, the path label can be larger (see \ref{ssec:datacenter}). In the largest studied case, the path label has a size of around $200$ bytes and the interface labels have a maximal size of around $100$ bits. For $\epsilon=4$, the total number of bits stored is $16\times 200\times 8\times 100= 2.56$ Mbits \textit{i.e} $320$ Kbytes. Even if this number is rather low compared to traditional routers, we can easily reduce it by globally optimizing the choice of the filtering matrices in order to reduce the value of $\epsilon$. Another possibility is to use filtering matrices that can be deduced from a short representation as in \ref{sec:XSRnutshell} where the columns of the filtering matrices are deduced from the first column by cyclic permutations.

\section{XSR Versus Existing Work}
\label{sec:comparisons}

Two recent results have interesting relationships with our proposal. In the two next sections, we expose these links and compare various metrics of interest.

\subsection{Optimal Path Encoding for Unicast Transmissions}
\label{ssec:OPE}
This first considered work, denoted OPE, is presented in \cite{2017:optimalPathEncoding}. The authors propose to use prefix codes to represent the interface labels and optimize the choice of the labels in order to minimize the maximal length of the path label. The path label is then the sequence of the interface labels with an additional pointer indicating to a router the position in the encoded path that it must consider. This pointer is updated by router according to the length of its interface label. This scheme allows to reduce significantly the size of the largest path label. 

Compared to this work, our proposal goes further by encoding their output (the sequence of optimized interface labels) with binary linear operations. 

If we estimate the amount of bits needed to implement each solution, the lengths of the path labels are equal both for OPE and XSR and require the same amount of signalling bits: around $4$ for OPE to encode the pointer and $\epsilon =3$ or $4$ with XSR.     

However, the advantage to add XSR on top of OPE is twofold:
\begin{enumerate}
    \item the pointer used by OPE involves ordered sequence of interface labels and thus can only be used for unicast transmissions. This is rather unfortunate because the idea of optimizing the interface labels according to the maximal length makes sense for multicast transmissions as in datacenter networks (see next section). Encoding the path with XSR removes this notion of order and thus allows multicast transmissions;   
    \item using fast filtering router operations allows to prevent any packet modification due to pointer update or possibly integrity checks.  
\end{enumerate}

\subsection{Datacenter Networks}
\label{ssec:datacenter}

\subsubsection{Recent Work in Source Routing for Datacenter Networks}
\label{sssec:IntrodataCenter}    

the potential of source routing for datacenter networking was demonstrated in KeyFlow \cite{keyflow} and COXcast \cite{coxCast2014} for both unicast and multicast transmissions. The first interest is the simplification of the management of multiple small multicast groups. The protocol Xcast \cite{rfc7855} was defined for this purpose. However, the generated headers can be large. To cope with this issue, KeyFlow and COXcast independently propose a source routing mechanism encoding the paths with interface labels associated to the interfaces of the routers. The main idea is to associate to each router a prime number label and to the paths an integer stored in the packet header. At the reception of a packet, a router simply computes the residue of the path modulo its label. The obtained value corresponds to the output interface(s). They reduce significantly the size of the path label compared to Xcast. Moreover, the core routers neither use forwarding tables nor modify the packets. This simplifies router operations and reduces the processing delay allowing ultra-low latency communications. 

The path label size is also reduced in the RDNA architecture \cite{rdna2018}. RDNA improves the way to choose the prime numbers and to compute the path. Since the integer path is determined from the prime numbers of the system, it is preferable to use short prime numbers in order to reduce the size of the integer path. Unfortunately, the amount of primes in integer numbers is quite low and it is not always possible to choose small prime numbers that provide residues with a given number of bits. Multiplying prime numbers (and finding the right ones) leads to oversized binary values, making the path label size not optimal and RDNA solution less flexible than XSR in particular in the context of multicast.

\subsubsection{Path Length Comparison}
\label{sssec:dataCenterComparison}    

the mechanism used in COXcast and RDNA lays on a concept similar to XSR. The main difference is that XSR is based on linear algebra while both others are based on modular arithmetic.
Linear algebra leads to several advantages:
\begin{itemize}
	\item linear algebra does not have the problem of scarcity of prime numbers and thus the length of the path is very close to the optimal. This is demonstrated in Tables \ref{tab:unicast} and \ref{tab:multicast};
	\item linear operations performed in routers are simple dot products and are less complex than modulo operations on integers;
	\item linear algebra provides a better flexibility. The configuration of the global network can easily be changed because finding new invertible matrices is effortless and leads to optimal size compared to the complex choice of the best set of prime numbers. 
\end{itemize}   

We now compare the overhead in terms of sizes. We consider the use cases studied in RDNA \cite{rdna2018} and compute the corresponding header length for each solution. 

The datacenter network analyzed in \cite{rdna2018} is a 2-tier Clos network topology (shown Fig.~\ref{fig:rdnanetwork}) composed of two stages core switches (spine and leaf) and one stage of edge switches connected to hosts. The connections are defined between two hosts. The considered path is defined between the edge switches respectively connected to the hosts source and  destination. The longest path is from the edge switch connected to source to the edge switch connected to the destination through a first leaf, a spine and a second leaf.   

\begin{figure}[htb]
	\begin{center}
		\includegraphics[width=0.35\textwidth]{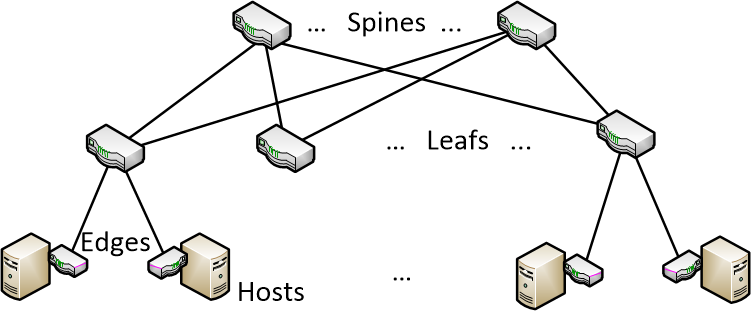}
	\end{center}
	\caption{Datacenter network}
	\label{fig:rdnanetwork}
\end{figure}

Let us denote $\texttt{spines}$, resp. $\texttt{leafs}$, the number of spines, resp. leafs,  and let $\texttt{ports}$ be the number of ports of the leafs. The number of ports of the spines is $\texttt{leafs}$. 

To represent a unicast path, we then need to store the output port of the first leaf (i.e. among $\texttt{ports}-1$ since we do not consider the input port), then the output port of the spine (among $\texttt{leafs}-1$) and then the output port of the second leaf (among $\texttt{ports}-1$). 

According to the results of \ref{ssec:pathLabelConstruction}, this path can be encoded in $2.\log_2(\texttt{ports}-1)+\log_2(\texttt{leafs}-1)+\epsilon$ bits with a high probability. We fix the value of $\epsilon$ to $4$ bits\footnote{The value of $\epsilon$ can be reduced by determining a static configurations of the filters (out of the scope of this paper).}.

The number of bytes necessary to encode the path is thus:
\begin{equation}
\big\lceil {(2.\log_2(\texttt{ports}-1)+\log_2(\texttt{ports}-1)+4)}/{8}\big\rceil \nonumber
\end{equation} 

The obtained values are compared to COXcast and RDNA in Table \ref{tab:unicast}. We observe that we always have path label sizes lower or equal to the other proposals.

\begin{table*}[t]
	\caption{\label{tab:unicast} Path label size (bytes) for unicast }
	\centering
	\begin{small}
	\begin{tabular}{ | c || c | c | c | c | c | c | c | c | c | c | c | c | c | c | c | c | c | c | c | c | c | c | }
		\hline
		Spine & \multicolumn{3}{c|}{2} & \multicolumn{5}{c|}{6} & \multicolumn{5}{c|}{12} &  \multicolumn{5}{c|}{8}   \\ \hline 
		Leafs &  \multicolumn{3}{c|}{4} & \multicolumn{5}{c|}{12} & \multicolumn{5}{c|}{16} & \multicolumn{5}{c|}{16}    \\ \hline  
		Ports & 16 & 24 & 32 & 16 & 24 & 32 & 48 & 96 &  16 & 24 & 32 & 48 & 96 &  16 & 24 & 32 & 48 & 96   \\ \hline \hline 
		COXcast\cite{coxCast2014} & 5 & 8 & 11 & 5 & 8 & 11 & 17 & 35 &  5 & 8 & 11 & 17 & 35 &5 & 8 & 11 & 17 & 35   \\ \hline 
		RDNA \cite{rdna2018} & 2 & 2 & 2 & 3 & 3 & 3 & 4 & 4 &  3 & 3 & 3 & 4 & 4 & 3 & 3 & 3 & 4 & 4   \\ \hline 
		XSR  & 2 & 2 & 2 & 2 & 3 & 3 & 3 & 3 & 2 & 3 & 3 & 3 & 3 & 2 & 3 & 3 & 3 & 3   \\ \hline
	\end{tabular}
\end{small}
\end{table*}

For multicast transmissions, the longest path is from the host source and its corresponding edge switch to all other hosts. The packet must be sent from the corresponding edge switch to a first leaf which forwards it to all its ports connected to other edge switches and to one spine. The spine transmits the packets to all others leafs which forwards it to all their connected edge switches (see Fig.~4 of \cite{rdna2018}).

We recall that multicast interface labels can be represented as a bitmap of the output ports. Thus, a multicast interface label of a spine is a vector of $\texttt{leafs}$ bits and an interface label of a leaf is a vector of $\texttt{ports}$ bits.

The application of results of \ref{ssec:pathLabelConstruction} leads to a encoded path of length \texttt{ports}+\texttt{leafs}+(\texttt{leafs}-1).\texttt{ports}+$\epsilon$ bits. By fixing the value of $\epsilon$ to $4$ bits, we obtain the following number of bytes:
\begin{equation}
    \big\lceil {(\texttt{ports}+\texttt{leafs}+(\texttt{leafs}-1).\texttt{ports}+4)}/{8}\big\rceil \nonumber
\end{equation} 

The values obtained are reported in Table \ref{tab:multicast} in the row "XSR v1". Except two cases, a small gain is observed in most configurations.  

\begin{table*}[t]
	\caption{\label{tab:multicast} Path label size (bytes) for multicast }
	\centering
	\begin{small}
		\begin{tabular}{ | c || c | c | c | c | c | c | c | c | c | c | c | c | c | c | c | c | c | c | c | c | c | c | }
			\hline
			Spine & \multicolumn{3}{c|}{2} & \multicolumn{5}{c|}{6} & \multicolumn{5}{c|}{6} &  \multicolumn{5}{c|}{8}   \\ \hline 
			Leafs &  \multicolumn{3}{c|}{4} & \multicolumn{5}{c|}{12} & \multicolumn{5}{c|}{16} & \multicolumn{5}{c|}{16}    \\ \hline  
			Ports & 16 & 24 & 32 & 16 & 24 & 32 & 48 & 96 &  16 & 24 & 32 & 48 & 96 &  16 & 24 & 32 & 48 & 96   \\ \hline \hline 
			COXcast\cite{coxCast2014} & 10 & 14 & 18 & 36 & 48 & 60 & 84 & 156 & 47 & 63 & 79 & 111 & 207 & 51 & 67 & 83 & 115 & 211   \\ \hline 
			RDNA \cite{rdna2018} & 9 & 14 & 18 & 26 & 39 & 52 & 75 & 154 &  34 & 51 & 68 & 100 & 200 & 34 & 51 & 68 & 100 & 200   \\ \hline 
			XSR v1 & 9 & 13 & 17 & 26 & 38 & 50 & 74 & 146 & 35 & 51 & 67 & 99 & 195 & 35 & 51 & 67 & 99 & 195  \\ \hline
			XSR v2 & 9 & 13 & 17 & 20 & 32 & 44 & 68 & 140 & 26 & 42 & 58 & 90 & 186 & 22 & 38 & 54 & 86 & 182  \\ \hline
		\end{tabular}
	\end{small}
\end{table*}

The rather intuitive representation of our filtering operation allows us to propose an enhancement of the path encoding in the multicast case. The idea is to optimize the interface label in the leafs by differentiating  the ports of the leafs connected to the spines and the one connected to edge switches. We propose to use some "signalling" bits in the label to encode differently the packets that must be only sent to some spines, the ones that must be only sent to edge switches and the others. To reduce the number of these bits, we use the prefix code $\{0,10,11\}$ as it was proposed in \cite{2017:optimalPathEncoding} for unicast transmissions.  

In a use case, for $\texttt{spines}=6$ and $\texttt{ports}=16$, the new labels would be:
\begin{itemize}
	\item $[10......]$: the code $10$ followed by the $6$ ports connected to the spines for the packets only sent to some spines
	\item $[0..........]$: the code $0$ followed by the $16-6=10$ ports connected to the edges  for the packets that only be sent to the edges.
	\item $[11................]$: the code $11$ followed by the $16$ ports for the others packets
\end{itemize}

This leads to an encoded path of length $2+\texttt{ports}+\texttt{leafs}+(\texttt{leafs}-1).(1+\texttt{ports}-\texttt{spines})+\epsilon$ bits. By fixing the value of $\epsilon$ to $4$, we obtain the following number of bytes:
\begin{eqnarray}
    \big\lceil (2+\texttt{ports}+\texttt{leafs}+(\texttt{leafs}-1).\nonumber \\
    (1+\texttt{ports}-\texttt{spines})+4)/{8}\big\rceil \nonumber
\end{eqnarray} 

From a practical point of view, to forward a packet with the types of labels, the leaf just needs to identify the $2$ first bits to determine the length of the label it must recover and the filter it must use.

The results reported in the row "XSR v2" of Table \ref{tab:multicast} show a significant gain in most use cases.     

\section{Conclusion and Future Work}
\label{sec:conclusion}

We presented XOR-based source routing, a new data plane scheme enabling fast forwarding by performing only simple linear operations over a binary vector label which embeds an encoded routing path label. Compared to recent approaches, XSR computes the smallest label possible and does not need to modify forwarded packets. The main advantage compared to other existing approaches is to allow the re-use of the same path label for the feedback path and so, prevent the receiver to compute another label to reply (considering the SDN controller allows the same path for reply). XSR provides the building blocks to speed up the forwarding plane and can be applied to different data planes such as MPLS or IPv6 for unicast and multicast communications.

In a future work, we expect to implement XSR within Mininet emulator to further demonstrate the effective processing cost of forwarding operations.
Furthermore, we believe that XSR would lead to promising application in terms of privacy and security if routers filtering operations remain unknown to attackers attempting to observe the network. Last but not least, we wish to present and discuss this solution at the IETF.

\bibliographystyle{plain}
\bibliography{bibRouting}

\end{document}